\DeclareMathOperator{\polylog}{polylog}
\def\A{\mathcal{A}}
\def\L{\mathcal{L}}
\let\eps\varepsilon
\def\reals{\mathbb{R}}
\newtheorem{theorem}{Theorem}[section]
\newtheorem{lemma}[theorem]{Lemma}
\theoremstyle{remark}
\newtheorem*{remark}{Remark}
\begin{document}

\title{Almost Tight Bounds\\for Eliminating Depth Cycles\\in Three Dimensions\thanks{%
    Work on this paper by B.A.\ has been partially supported by NSF
    Grants CCF-11-17336, CCF-12-18791, and CCF-15-40656, and by BSF grant 2014/170.
    Work by M.S.\ has been supported by Grant 2012/229 from the
    U.S.-Israel Binational Science Foundation, by Grant 892/13 from
    the Israel Science Foundation, by the Israeli Centers for Research
    Excellence (I-CORE) program (center no.~4/11), and by the Hermann
    Minkowski--MINERVA Center for Geometry at Tel Aviv University.
    A preliminary version of this paper has appeared in {\it Proc.\ 48th Annu.\ ACM Symposium on Theory of Computing}, 2016\cite{stoc}.}
}

\author{%
  Boris Aronov\thanks{%
    Department of Computer Science and Engineering, Tandon School of Engineering,
    New York University, Brooklyn, NY~11201, USA;
    \textsl{boris.aronov@nyu.edu}.}%
  \and
  Micha Sharir\thanks{%
    Blavatnik School of Computer Science, Tel Aviv University, Tel-Aviv 69978,
    Israel; \textsl{michas@post.tau.ac.il}.} }

\maketitle

\begin{abstract}
  Given $n$ non-vertical lines in 3-space, their vertical depth
  (above/below) relation can contain cycles. We show that the lines
  can be cut into $O(n^{3/2}\polylog n)$ pieces, such that the depth
  relation among these pieces is now a proper partial order.  This
  bound is nearly tight in the worst case.  

  Previous results on this topic could only handle restricted cases of
  the problem (such as handling only triangular cycles, by Aronov,
  Koltun, and Sharir (2005), or only cycles in grid-like patterns, by
  Chazelle et al.~(1992)), and the bounds were considerably weaker---much
  closer to the trivial quadratic bound.

  Our proof uses a recent variant of the polynomial partitioning
  technique, due to Guth, and some simple tools from algebraic
  geometry. It is much more straightforward than the previous ``purely
  combinatorial'' methods.

  Our technique can be extended to eliminating all cycles in the depth
  relation among segments, and among constant-degree algebraic arcs.  
  We hope that a suitable extension of this technique could be used to handle
  the much more difficult case of pairwise-disjoint triangles.

  We also discuss several algorithms for constructing a small set of
  cuts so as to eliminate all depth-relation cycles among the lines
  (minimizing such a set, for the case of line segments, is known to
  be NP-complete).  The performance of these algorithms improves due
  to our new bound, but, so far, none of them both (a)~produce close
  to $n^{3/2}$ cuts, and (b)~run in time close to $n^{3/2}$, in the
  worst case.

  Our results almost completely settle a 35-year-old
  open problem in computational geometry, motivated by hidden-surface
  removal in computer graphics.
\end{abstract}

\section{Introduction}
\label{sec:intro}

\paragraph*{The problem.}
Let $\L$ be a collection of $n$ non-vertical lines in $\reals^3$ in
\emph{general position}. In particular, we assume that no two lines in $\L$
intersect, that the $xy$-projections of no pair of the lines are
parallel, and those of no three of the lines are concurrent.
For any pair $\ell,\ell'$ of lines in $\L$, we say that
$\ell$~passes \emph{above} $\ell'$ (equivalently, $\ell'$ passes
\emph{below} $\ell$) if the unique vertical line that meets
both $\ell$ and $\ell'$ intersects $\ell$ at a point that lies higher
than its intersection with~$\ell'$. We denote this relation as
$\ell'\prec\ell$. The relation $\prec$ is total (under our assumptions),
but in general it
need not be transitive, so it may contain \emph{cycles} of the form
$\ell_1\prec\ell_2\prec\cdots\prec\ell_k\prec\ell_1$, for some $k\ge 3$. We call this a
\emph{$k$-cycle}, and refer to $k$ as the \emph{length} of the cycle.
Cycles of length three are called \emph{triangular}. See Figure~\ref{ccwcyc}.
\begin{figure}[hbpt]
  \centering
  \includegraphics{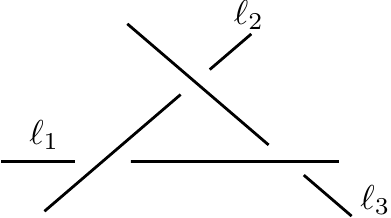}
  \caption{A triangular depth cycle, viewed from above.}
  \label{ccwcyc}
\end{figure}

If we cut the lines of $\L$ at a finite number of points, we obtain a
collection of lines, segments, and rays. We can extend the definition
of the relation $\prec$ to the new collection in the obvious manner,
except that it is now only a partial relation. Our goal is to cut the
lines in such a way that $\prec$ becomes a \emph{partial order}, in which
case we call it a \emph{depth order}. We note that it is trivial to
construct a depth order with $\Theta(n^2)$ cuts: Simply cut each line
near every point whose $xy$-projection is a crossing point with another
projected line. It is desirable though to minimize the number of cuts.
A long-standing conjecture, open since~1980, is that one can always
construct a depth order with a \emph{subquadratic} number of cuts.
In this paper we finally settle this conjecture, in a strong, almost worst-case tight
manner; see below for precise details.

\paragraph*{Background.}
The main motivation for studying this problem comes from
\emph{hidden surface removal} in computer graphics.
A detailed description of this motivation can be found, e.g., in the
earlier paper of Aronov et al.~\cite{AKS}.
Briefly, a conceptually simple technique for rendering a scene in
computer graphics is the so-called Painter's Algorithm, which renders
the (pairwise openly disjoint) objects in the scene on the screen in a back-to-front manner,
painting each new object over the portions of earlier objects that it hides.
For this, though, one needs an acyclic depth relation among the objects with
respect to the viewing point (which, as we assume in this paper, without loss of generality,
lies at $z=+\infty$). When there are cycles in the depth relation, one would
like to cut the objects into a small number of pieces, so as to eliminate
all cycles, and then paint the pieces in the above manner, obtaining a correct
rendering of the scene; see \cite{AKS,4M-book} for more details.

The study of depth cycles in a set of lines (or, more generally, of segments) in $\reals^3$ 
was in a rather poor state circa 1990, where only very weak combinatorial and algorithmic bounds were available. The situation is reviewed in de Berg's 1992 dissertation \cite[Chapter 9]{MdB-thesis} (where he summarizes the state of affairs before the results described in his thesis and those of Chazelle et al.~\cite{CEG+} were obtained; see the discussion below). At that time it was not even known how to compute a depth order among $n$ line segments in 3-space in subquadratic time, assuming that such an order exists, or how to verify in subquadratic time that a given order of $n$ segments is acyclic.

The study of subquadratic bounds for depth cycles goes back to
Chazelle et al.~\cite{CEG+}, who have shown that, if the $xy$-projections
of a collection of $n$ segments in 3-space form a ``grid'' (see Figure~\ref{grid} for an illustration 
and the paper for a formal definition),
\begin{figure}[hpbt]
  \centering
  \includegraphics{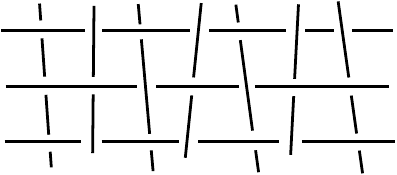}
  \caption{A collection of line segments that forms a grid, viewed from below.}
  \label{grid}
\end{figure}
then all cycles defined by this collection can be eliminated with $O(n^{9/5})$ cuts.
Another significant development is due to Aronov et al.~\cite{AKS},
who have considered the problem of triangular cycles,
and established the rather weak (albeit subquadratic)
$O(n^{2-1/34}\log^{8/17}n)$ upper bound on the number of \emph{elementary triangular} cycles
(namely, cycles whose $xy$-projections\footnote{%
  Here, and throughout the paper, we think of a cycle
  $C:\; \ell_1\prec\ell_2\prec\cdots\prec\ell_k\prec\ell_1$ as a spatial geometric
  object; it is the unique closed polygonal path whose edges alternate between segments along
  the lines and vertical jumps between consecutive lines that connect the unique pairs of 
  co-vertical points on those lines. This realization will be defined formally in Section~\ref{sec:cut}.}
form triangular faces in the arrangement of the projected lines).
They also showed that $O(n^{2-1/69}\log^{16/69}n)$ cuts suffice to eliminate
all triangular cycles. Finally, combining this bound with earlier algorithmic
techniques of Solan \cite{So} and of Har-Peled and Sharir \cite{HPS}, they
have obtained an algorithm that eliminates all triangular cycles by making
roughly $O(n^{2-1/138})$ cuts. However, their results did not apply to general
(non-triangular) cycles, and, in addition to the very weak bounds just stated,
the proof technique was very involved.  Just as the analysis of 
Chazelle et al.~\cite{CEG+}, it used, as a major ingredient, the
impossibility of certain ``weaving patterns'' of lines in space, an interesting
and intriguing topic in itself. Unfortunately, it appears that arguments
based on forbidden weaving patterns lead to fairly weak bounds.

In another, more recent development, Aronov et al.~\cite{ABGM} have shown that
finding the minimum number $\chi$ of cuts needed to eliminate all cycles in a collection
of segments in $\reals^3$ is NP-complete. They have also presented a deterministic 
polynomial-time approximation algorithm that constructs $O(\chi\log\chi\log\log\chi)$ cuts.
More details are given below.

\paragraph*{Our contribution.}
In this paper we settle the general problem and show that \emph{all} cycles
in a set of $n$ lines can be eliminated with $O(n^{3/2}\polylog n)$ cuts.
A simple and well-known construction, reviewed below, yields a scenario where
$\Omega(n^{3/2})$ cuts have to be made, implying that ours is the best possible
worst-case bound, up to the polylogarithmic factor.

The proof of the new bound is embarrassingly straightforward. It uses
tools from algebraic geometry, in the spirit of much recent work
that exploited similar ideas; see, e.g., the simple proofs in \cite{KSS,Qu}
for the corresponding worst-case tight bound of $\Theta(n^{3/2})$ on the number
of so-called \emph{joints} in a collection of $n$ lines in 3-space. At the heart of the construction
lies a recent result of Guth \cite{Gut}, which extends the basic
\emph{polynomial partitioning} technique of Guth and Katz \cite{GK2}
to higher-dimensional objects (lines or curves in our case).

As a matter of fact, the algebraic approach to this problem is fairly versatile and can be extended 
to the elimination of cycles involving more general objects. In this paper we also discuss such extensions
to the cases of line segments (this is in fact a trivial extension) and of constant-degree algebraic arcs.
Furthermore, in both cases, by combining our technique with standard tools for constructing output-sensitive 
cuttings in the plane, we obtain improved bounds on the number of cuts necessary to eliminate all cycles, 
which depend on the number of 
intersections among the $xy$-projections of the segments or arcs. See Theorems~\ref{thm:segments} and \ref{thm:arcs}
for further details.

We note that the practical motivation arising from computer graphics involves data sets consisting of 
pairwise openly-disjoint triangles. Eliminating cycles in the depth relation of a collection
of triangles (with a subquadratic number of cuts) is a considerably more difficult problem, which so far
has barely been touched. The case of lines, as studied in this paper and in the previous cited works,
is a simpler instance of the problem, which already turned out to be a very challenging question.
It is our hope that the technique presented in this paper could be extended to tackle this case too, and we
are presently investigating such an extension.

We also note that the problem studied here is different from most of the combinatorial geometry
questions tackled so far by the new algebraic approach,  in that these former problems
involve \emph{incidences} between points, lines, and other objects. In 
contrast, in this paper the lines are not incident to one another, and the configurations that
we want to capture involve certain non-contact spatial (here, ``above/below'') relationships between them.
It is our hope that this study will find applications to additional problems involving relations more general than incidences.

Our proof is constructive, and leads, in principle, to an efficient algorithm
for performing the cuts. The only currently missing ingredient is an
efficient construction of Guth's partitioning polynomial, a step that we
leave as a topic for further research.
(The problematic aspects of effectively constructing a partitioning
polynomial, already for the simpler case of a set of points, and techniques
for overcoming these issues are discussed by Agarwal et al.~\cite{AMS};
one hopes that an extension of these techniques could also be used for
effectively partitioning lines or curves, and we plan to investigate this question further.)

Alternatively, to identify the cuts sufficient to eliminate all
cycles, one could also use the earlier algorithms of
Har-Peled and Sharir \cite{HPS} or of Solan \cite{So}.  
Our analysis implies that they perform $O(n^{7/4}\polylog n)$ cuts
(using the algorithm of \cite{HPS}; the one in \cite{So} generates a slightly larger number of cuts),
in expected time $O(n^{11/6+\eps})$, for any $\eps>0$, significantly improving previous 
bounds, but still falling short of the ideal goal of achieving a running time close to the worst-case optimal number of cuts. 

Another approach is to use the deterministic approximation algorithm of 
Aronov et al.~\cite[Theorem 3.1]{ABGM}, mentioned earlier. 
It constructs $O(\chi\log\chi\log\log\chi)$ cuts that eliminate all cycles, 
where $\chi$ is the smallest number of such cuts, in $O(n^{4+2\omega}\log^2 n)$ 
time, where $\omega < 2.373$ is the exponent of matrix multiplication.
In view of our main result, this algorithm produces $O(n^{3/2}\polylog n)$ cuts
(where the power in the polylogarithmic factor is slightly larger than that in Theorem~\ref{thm:cuts}).
This is probably the best algorithm one can offer at the present state of affairs.
Its only drawback is its rather large running time (which is $O(n^{8.746})$); the
algorithms of \cite{HPS,So} are much faster, but they produce a larger number of cuts.

\paragraph*{Paper organization.}
Section~\ref{sec:cut} presents the main result on the number of cuts sufficient to 
eliminate all depth cycles among lines in $\reals^3$.
Section~\ref{sec:alg} discusses the algorithmic aspects of efficiently finding such a set of cuts. 
Finally, Section~\ref{sec:ext} discusses the extensions of our technique to 
the cases of line segments and of constant-degree algebraic arcs. 

\section{Eliminating all cycles}
\label{sec:cut}

We first introduce a few definitions.
Let $\L$ be a collection of $n$ non-vertical lines in $\reals^3$ in general position.
For each $\ell\in \L$, denote by $\ell^*$ the $xy$-projection of $\ell$ and by
$\L^*$ the collection of the $n$ resulting projections. The general position
assumption on $\L$ implies that $\L^*$ is also in general position, in the traditional sense, that is, with no two lines parallel and no three concurrent.
Consider the planar arrangement $\A(\L^*)$ of $\L^*$.

Recall that $k$ distinct lines $\ell_1,\ldots,\ell_k$, for $k\ge 3$, form a \emph{$k$-cycle}
$C$ if $\ell_1\prec \ell_2\prec \cdots \prec \ell_k\prec \ell_1$.
We can interpret $C$ as a spatial object as follows. For each $i=1,\ldots,k$,
let $v_i^+\in\ell_i$ and $v_{i+1}^-\in\ell_{i+1}$ (with indices treated $\mathop{\mathrm{mod}} k$)
be the two unique points on these lines that are vertically above each other (informally,
$C$ ``jumps upwards'' from $v_i^+$ on $\ell_i$ to $v_{i+1}^-$ on $\ell_{i+1}$).
We associate with $C$ the closed directed polygonal path
\[
\pi(C) := v_1^-v_1^+v_2^-v_2^+\cdots v_k^-v_k^+v_1^- .
\]
Let $e_i$ denote the segment $v_i^-v_i^+$ on $\ell_i$. Then $\pi(C)$ alternates
between the segments $e_i$ and the vertical jumps $v_i^+v_{i+1}^-$; see Figure~\ref{fig:cycle-and-projection}.
\begin{figure}
  \centering
  \includegraphics[scale=1.1]{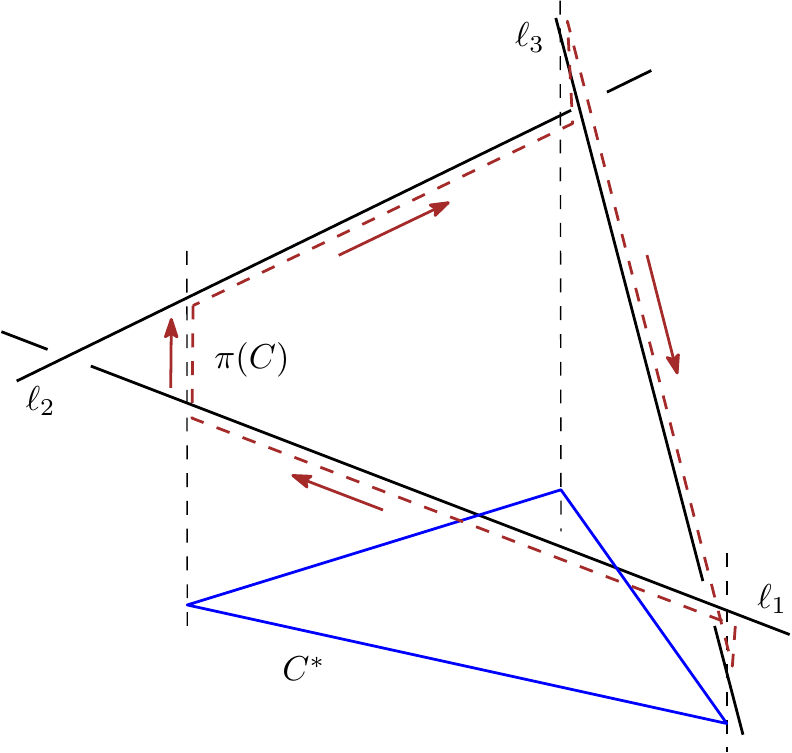}
  \caption{Cycle $C \colon \ell_1 \prec \ell_2 \prec \ell_3 \prec \ell_1$ (thick lines), 
  with the corresponding path $\pi(C)$ (in dashed brown), and its projection $C^*$ (solid blue).}
  \label{fig:cycle-and-projection}
\end{figure}

We say that $C$ is \emph{eliminated} if at least one of the non-vertical edges $v_i^-v_i^+$ of $\pi(C)$
is cut. It is an easy (yet crucial) observation that if we cut the lines of $\L$ at some discrete set of points,
so that all cycles in $\L$ are eliminated, in the above sense, then the collection of lines, rays, and segments
resulting from the cuts has an acyclic depth relation.

The $xy$-projection $C^*$ of $\pi(C)$ (or, with a slight abuse of notation, of $C$)
is a closed polygonal path contained in $\cup \L^*$.
That is, it is the concatenation of the projections $e_i^*$ of the segments $e_i$
(the vertical segments disappear, or rather shrink to points, in the projection).

The path $C^*$ can be fairly arbitrary, non-convex, and even self-crossing.
Nevertheless, we claim that, for the purposes of eliminating all cycles,
it suffices to consider only \emph{simple cycles},%
\footnote{%
  This is a slight abuse of terminology, as we require the projection $C^*$
  of the cycle $C$ to be simple, rather than $C$ itself.}
that is, cycles $C$ for which $C^*$ is non-self-crossing. This is because any
other cycle $C$ can be shortcut to a cycle $C_0$, such that (a)~$C_0$ has fewer
edges than $C$, and (b)~$C_0^*\subset C^*$. Clearly, any cut that eliminates $C_0$
also eliminates $C$. We repeat this reduction until we obtain a simple cycle
(in the extreme, we reach the case where $C_0$ is triangular, and thus simple).
Indeed, if $C^*$ is self-crossing, let $w$ be a point where $C^*$ crosses itself,
and let $\ell$, $\ell'$ be the lines whose projections cross at $w$; see Figure~\ref{fig:shortcut}.
Then $C^*$ is naturally split at $w$ into two shorter closed paths, and it is easily checked that one of them
is the projection of a cycle $C_0$ in $\L$ that satisfies the properties claimed above.

In what follows we thus restrict ourselves to simple cycles only.
Two (simple) cycles $C_1$, $C_2$ \emph{overlap} if $C_1^*$ and $C_2^*$
share an edge of $\A(\L^*)$; for example, in Figure~\ref{fig:shortcut}, cycles $C$ and $C_0$ overlap.

\begin{figure}[hpbt]
  \centering
  \includegraphics[width=0.6\textwidth]{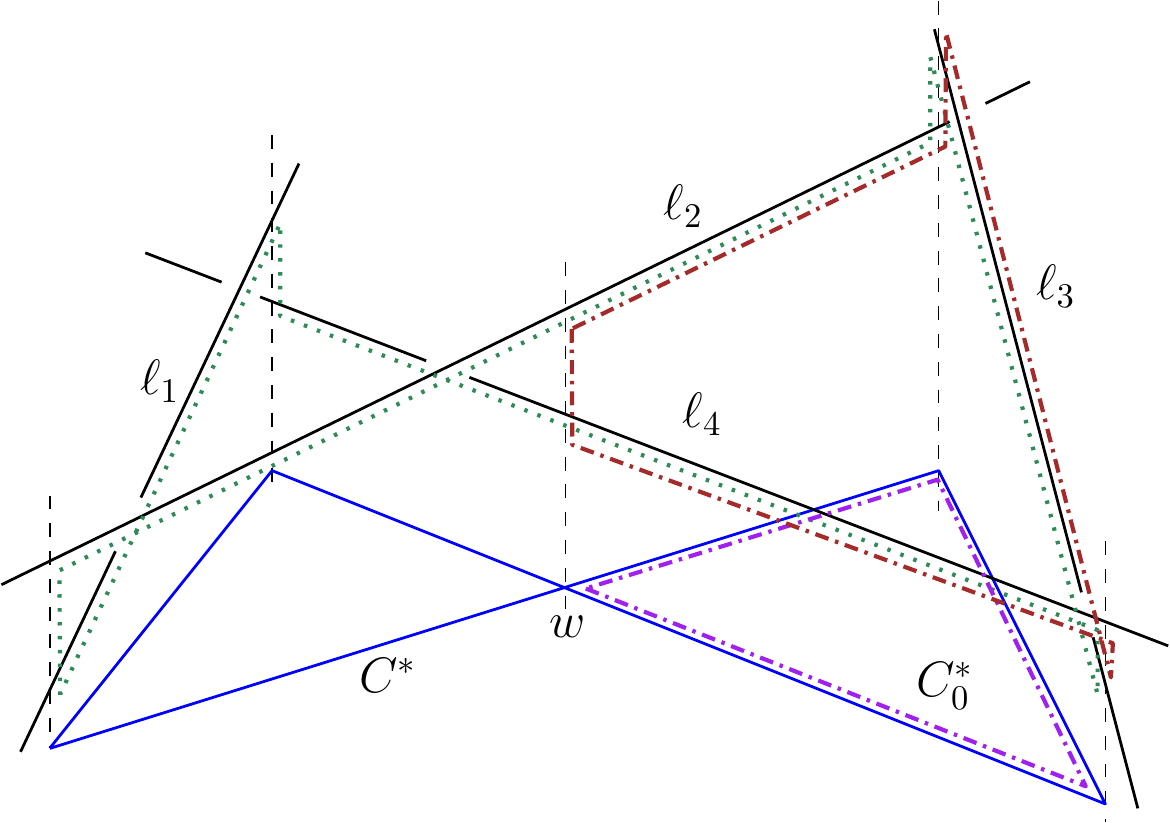}
  \caption{Cycle $C \colon \ell_1 \prec \ell_2 \prec \ell_3 \prec \ell_4 \prec \ell_1$ (thick lines, 
  with corresponding path $\pi(C)$ in dotted green), whose projection $C^*$ (solid blue) crosses itself.  
  We shortcut it to a new triangular cycle $C_0 \colon \ell_2 \prec \ell_3 \prec \ell_4 \prec \ell_2$, 
  indicating the corresponding path $\pi(C_0)$ in dash-dotted brown and its projection $C_0^*$ in dash-dotted purple.}
  \label{fig:shortcut}
\end{figure}

The following is the main result of the paper.
%%%%%%%%%%%%%%%%%%%%%%%%%%%%%%%%%%%%%%%%%%%%%%%%%%%%
\begin{theorem} \label{thm:cuts}
Let $\L$ be a collection of $n$ non-vertical lines in $\reals^3$ in general position.
The lines of $\L$ can be cut at $O(n^{3/2}\polylog n)$ points so that the
depth relation on the resulting pieces (lines, rays, and segments) has no cycles.
This bound is almost tight in the worst case, since $\Omega(n^{3/2})$ cuts are sometimes necessary.
\end{theorem}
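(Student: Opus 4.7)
The plan is to apply Guth's polynomial partitioning theorem for lines in $\reals^3$ within an inductive argument, in the same spirit as the algebraic proofs for the tight $\Theta(n^{3/2})$ bound on joints.

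For the lower bound I would exhibit the well-known grid-like construction: two families of roughly $\sqrt{n}$ lines arranged in nearly parallel pencils with small vertical offsets, so that each of the $\Theta(n)$ crossings of the planar projections sits at the bottom of a small triangular depth cycle. These can be assembled into $\Omega(n^{3/2})$ simple cycles whose projections use pairwise edge-disjoint edges of $\A(\L^*)$, so that no single cut can eliminate two of them, forcing $\Omega(n^{3/2})$ cuts.

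For the upper bound I would argue by induction on $n$. Choose a degree parameter $D \approx n^{1/2-\eps}$ for a small constant $\eps > 0$, and invoke Guth's theorem to obtain a polynomial $p \in \reals[x,y,z]$ of degree at most $D$ such that $\reals^3 \setminus Z(p)$ decomposes into $O(D^3)$ open cells, each of which is crossed by at most $O(n/D^2)$ lines of $\L$, while at most $O(D^2)$ lines of $\L$ lie entirely in $Z(p)$. Every line not contained in $Z(p)$ meets $Z(p)$ in at most $D$ points, so cutting every such line at every such crossing contributes $O(nD)$ cuts. The critical step, and the one I expect to require the main technical effort, is to establish that after these cuts every surviving simple cycle $C$ is \emph{localized}: either all non-vertical edges $e_i$ of $\pi(C)$ lie in a common open cell of the partition, or every line of $C$ lies in $Z(p)$. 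One direction is easy: since every $\ell_i \cap Z(p)$ has been cut, each $e_i$ individually sits inside one open cell or inside $Z(p)$. The subtle part is excluding the case where two consecutive edges $e_i,e_{i+1}$ lie in distinct open cells connected by a vertical jump that pierces $Z(p)$; I expect this to require a topological-algebraic argument exploiting the shared planar crossing $w_i = \ell_i^* \cap \ell_{i+1}^*$, the general-position assumption, and the simplicity of $C^*$, to force the two cells to coincide.

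Granting the localization claim, the bound follows from the recurrence
\[
f(n) \le c_1\, n D + O(D^3)\, f\!\left(c_2\, n/D^2\right) + f\bigl(O(D^2)\bigr),
\]
where the last term handles the $O(D^2)$ lines contained in $Z(p)$ as a strictly smaller independent instance. Substituting the ansatz $f(m) \le K m^{3/2}\log^a m$ and taking $D = n^{1/2-\eps}$, the first term is $O(n^{3/2-\eps})$, the last term is $O(n^{3/2 - 3\eps}\polylog n)$, and the middle term evaluates to $K\, c_2^{3/2} (2\eps)^a\, n^{3/2}\log^a n$. For $\eps$ fixed and $a$ chosen large enough that $c_2^{3/2}(2\eps)^a < 1$, the induction closes, yielding $f(n) = O(n^{3/2}\polylog n)$ and matching the $\Omega(n^{3/2})$ lower bound up to polylogarithmic factors.
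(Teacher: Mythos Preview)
Your localization claim is the genuine gap, and it cannot be repaired by a topological argument: the two cells really can be different. To see why, trace $\pi(C)$ and record the \emph{level} $\lambda(q)$, the number of intersection points of $Z(p)$ with the open downward vertical ray from $q$ (counted with multiplicity). Along each upward vertical jump the level can only stay the same or increase. Along a non-vertical edge $e_i$ that is disjoint from $Z(p)$, the level is \emph{not} constant in general: it drops precisely when $q$ passes vertically above a singular point or a point of vertical tangency of the planar curve $Z(p)\cap h(\ell_i)$, where $h(\ell_i)$ is the vertical plane through $\ell_i$ (two real roots of $f(x_0,y_0,\cdot)$ merge and disappear below $q$). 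Since $\pi(C)$ is closed, the level returns to its initial value; if any vertical jump pierces $Z(p)$ the level strictly increases there, so it must strictly decrease on some $e_i$, and this can happen with every $e_i$ disjoint from $Z(p)$. In that situation the edges $e_i$ of $\pi(C)$ genuinely lie in different cells of $\reals^3\setminus Z(p)$ and no cut you have made touches the cycle, so the recursion never sees it.

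The paper's remedy is exactly this level argument together with additional cuts: for each line $\ell$, restrict $p$ to $h(\ell)$ to get a square-free bivariate $g_\ell$, and cut $\ell$ at every point lying directly above a common zero of $g_\ell$ and $\partial g_\ell/\partial z$. By B\'ezout this is $O(D^2)$ cuts per line, hence $O(nD^2)$ in total, which dominates the overhead and forces $D\approx n^{1/4}$ (not $n^{1/2-\eps}$); the recurrence $\chi(n)\le O(D^3)\chi(cn/D^2)+O(nD^2)$ then solves to $O(n^{3/2}\polylog n)$. Lines contained in $Z(p)$ are handled by the same mechanism after factoring out the corresponding linear factor from $g_\ell$, not by setting them aside as a smaller instance; your assertion that at most $O(D^2)$ lines lie in $Z(p)$ is not part of Guth's theorem and fails whenever $Z(p)$ has a ruled component.

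Your lower bound sketch also does not yield $\Omega(n^{3/2})$: two pencils of $\sqrt{n}$ lines have only $\Theta(n)$ projected crossings and hence at most $\Theta(n)$ edge-disjoint cycles. The correct construction perturbs the $3n$ axis-parallel lines of the $\sqrt{n}\times\sqrt{n}\times\sqrt{n}$ integer grid; its $n^{3/2}$ joints become $n^{3/2}$ pairwise non-overlapping triangular cycles, each requiring its own cut.
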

%%%%%%%%%%%%%%%%%%%%%%%%%%%%%%%%%%%%%%%%%%%%%%%%%%%%

\begin{remark}
  Theorem~\ref{thm:cuts} also provides the same upper bound on the
  maximum size of any family~$F$ of pairwise non-overlapping cycles in
  $\L$, because a distinct cut is required to eliminate each cycle
  of~$F$.
\end{remark}

\begin{proof}
As argued above, it suffices to cut all simple cycles.
We fix some degree $D=D(n)$, which depends on $n$ and whose value will be set below,
and construct a non-zero trivariate polynomial $f\in\reals[x,y,z]$ of degree at most $D$, such
that each of the $s=O(D^3)$ open connected components (\emph{cells})
of $\reals^3\setminus Z(f)$ is
intersected by at most $cn/D^2$ lines of $\L$, where $Z(f)$ denotes the zero set of $f$,
and $c$ is an absolute constant.
By the aforementioned recent result of Guth \cite{Gut}, such a polynomial does exist,
for some suitable constant $c$.
(As already noted, effective and efficient construction of such a polynomial remains
to be worked out, and is the only reason this proof is not entirely polynomial-time
constructive.) Let $\tau_1,\ldots,\tau_s$ be the cells of $\reals^3\setminus Z(f)$,
and, for each $i$, let $\L_i$ denote the set of lines of $\L$ that intersect $\tau_i$.

In what follows we want to exclude situations where $Z(f)$ fully contains a vertical
segment (and therefore, a line).
We can guarantee that this does not happen, by applying a
sufficiently small generic ``tilting'' to the coordinate frame, ensuring that this property holds,
and that every simple cycle in $\L$ remains a (simple) cycle.

Define the \emph{level} $\lambda(q)$ of a 
point $q\in\reals^3$ with respect to $Z(f)$ to be the number of intersection points of $Z(f)$ with the
downward-directed open vertical ray $\rho_q$ emanating from~$q$. Formally, let
$(x_0,y_0,z_0)$ be the coordinates of $q$, and consider the univariate polynomial
$F(z) = f(x_0,y_0,z)$. The level $\lambda(q)$ of $q$ is the number of real zeros of $F$ in
$(-\infty,z_0)$, counted with multiplicity.

Denote by $\chi(\L)$ the minimum number of cuts needed to eliminate all (simple) cycles
in the given set $\L$ of lines, and put $\chi(n) := \max_{|\L|=n} \chi(\L)$, where the
maximum is taken over all collections $\L$ of $n$ non-vertical lines in general
position in $\reals^3$.

\paragraph*{The procedure for cutting the lines.}
The procedure is recursive and follows the partitioning induced by $Z(f)$.
It consists of the following steps.

\medskip

\noindent{\bf (i)}
We cut each line $\ell\in \L$ not fully contained in $Z(f)$
at all its intersection points with $Z(f)$. The number of such cuts is
at most $D$ per line, for a total of $O(nD)$ cuts.

\medskip

\noindent{\bf (ii)}
For each line $\ell\in \L$ not fully contained in $Z(f)$,
let $h(\ell)$ be the vertical plane containing~$\ell$, and
let $g_\ell$ be the bivariate polynomial obtained by restricting $f$ to $h(\ell)$.
More formally, parametrize $h(\ell)$ by orthogonal coordinates $(\xi_\ell,z)$, where $\xi_\ell$ is
horizontal, and each $(\xi_\ell,z)$ represents a point $(x_\ell(\xi_\ell),y_\ell(\xi_\ell),z)$ in $h(\ell)$,
where $x_\ell(\cdot)$, $y_\ell(\cdot)$ are appropriate linear functions depending on~$\ell$.
Then $g_\ell$ is given by $g_\ell(\xi_\ell,z) := f(x_\ell(\xi_\ell),y_\ell(\xi_\ell),z)$; it is a bivariate polynomial
of degree at most $D$. By removing repeated factors, we may assume that $g_\ell$ is
square-free. We then cut $\ell$, in addition to the cuts made in step~(i), at each
point that lies directly above a singular point, or a point of vertical tangency,
of $Z(g_\ell) \subset h(\ell)$.  A simple application of B\'ezout's theorem implies that the number of
such points is $O(D^2)$, because each such point is a common zero of $g_\ell$ and $\partial g_\ell /\partial z$.
Note that to apply B\'ezout's theorem, we need to ensure that $g_\ell$ and $\partial g_\ell /\partial z$
do not have a common factor, which is indeed the case since $g_\ell$ is square-free.
Hence we perform in this step $O(D^2)$ cuts of each line, for a total of $O(nD^2)$ cuts.

\medskip

\noindent{\bf (iii)}
Assume next that $\ell \subset Z(f)$; since $Z(f)$ contains no vertical lines, $h(\ell) \not\subset Z(f)$.
Let $g_\ell$ be the (square-free) bivariate polynomial defined in step (ii). 
Then $\ell \subset Z(g_\ell)$ is an irreducible component of $Z(g_\ell)$.
By removing the linear factor corresponding to~$\ell$, we replace $g_\ell$ by another
square-free polynomial $g^0_\ell$, of degree smaller than $D$, whose zero set does not fully
contain $\ell$. We then cut $\ell$ at each point where it meets $Z(g^0_\ell)$ (this
is a variant of step (i)), and at each point that lies directly above a critical
point of $g^0_\ell$, as defined earlier (a variant of step (ii)). 
As before, the number of such cuts of~$\ell$ is $O(D^2)$, for a total of $O(nD^2)$ cuts.

\medskip

\noindent{\bf (iv)}
We now proceed recursively: For each cell $\tau_i$ of the partition, we recurse on
the corresponding subset $\L_i$ of lines. The bottom of the recursion is at cells
$\tau_i$ for which $n_i:=|\L_i| < D^2(n_i)/c$. For such cells we apply the na\"ive procedure,
noted in the introduction, which cuts the lines of $\L_i$ into $O(n_i^2) = O(D^4(n_i))$
pieces, so that all cycles in $\L_i$ are trivially eliminated.

%%%%%%%%%%%%%%%%%%%%%%%%%%%%%%%%%%%%%%%%%%%%%%%%%%%%%%
\begin{lemma} \label{cutok}
The procedure described above eliminates all the cycles in $\L$.
\end{lemma}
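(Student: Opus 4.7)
The plan is to show by strong induction on $n$ that all simple cycles in $\L$ are eliminated. The base case, $n < D^2/c$, is immediate since step (iv) then applies the na\"ive quadratic cutting. For the inductive step, it suffices to show that every simple cycle $C$ not already cut by steps (i)--(iii) is confined to the closure of a single cell $\tau_j$ with all of its lines in $\L_j$; the recursive call in step (iv) then eliminates $C$.

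The main tool is the level function $\lambda$ introduced just before the procedure. First, I would verify that after the cuts in (i)--(iii), $\lambda$ is constant on every uncut piece of every line $\ell \in \L$. As one moves along $\ell$, $\lambda$ can jump only at (a)~points where $\ell$ meets $Z(f)$ (equivalently $Z(g_\ell)$), cut in step (i) or, if $\ell\subset Z(f)$, by the first part of step (iii) applied to $g_\ell^0$; and (b)~points directly above a singularity or a vertical-tangency point of $Z(g_\ell)$ in $h(\ell)$ (resp.\ of $Z(g_\ell^0)$), cut in step (ii) (resp.\ the second part of step (iii)). The factorization $g_\ell = \ell \cdot g_\ell^0$ (up to units) in the $\ell \subset Z(f)$ case justifies passing to $g_\ell^0$, since the linear factor carried by $\ell$ itself contributes nothing to the variation of $\lambda$ transverse to $\ell$.

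Next, if $C$ is not eliminated by (i)--(iii), I would show that $\lambda$ is constant along all of $\pi(C)$. Constancy on each non-vertical edge $e_i$ follows from the previous step. On each upward vertical jump from $v_i^+$ to $v_{i+1}^-$, $\lambda$ weakly increases, since more zeros of $f(x_0,y_0,z)$ lie below a higher point on the common vertical line. The total change around the closed loop $\pi(C)$ is zero, so each jump in fact preserves $\lambda$, and in particular no zero of $f$ lies in the open interior of any vertical jump. Together with the absence of $Z(f)$ in the open non-vertical edges, a continuity argument at each transition vertex $v_i^\pm$ (which the generic tilting arranges to avoid $Z(f)$ whenever $\ell_i \not\subset Z(f)$) places $\pi(C)\setminus Z(f)$ inside a single cell $\tau_j$, so every line of $C$ not lying in $Z(f)$ belongs to $\L_j$ and the recursion on $\L_j$ eliminates $C$ by the inductive hypothesis.

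The main obstacle is the remaining case: a line $\ell_i$ of the cycle is contained in $Z(f)$. Such a line belongs to no $\L_j$ and is never passed to recursion, so the cuts of step (iii) must already eliminate $C$ through $\ell_i$. My plan is to exploit the fact that these cuts record every change in the local structure of $Z(f)$ adjacent to $\ell_i$---both crossings of $\ell_i$ with the other components encoded by $Z(g_{\ell_i}^0)$ and projections of its singular or vertically tangent points---so that the constancy of $\lambda$ along $\pi(C)$, propagated to the two edges of $C$ meeting $\ell_i$ at $v_i^-$ and $v_i^+$, forces some interior point of $e_i$ to coincide with one of the $O(D^2)$ cut points placed on $\ell_i$ in step (iii), eliminating $C$ directly at $\ell_i$.
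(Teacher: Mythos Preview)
Your overall strategy---induction on $n$, the level function $\lambda$, and the closed-loop monotonicity argument---matches the paper's proof, and your treatment of the case where no line of $C$ lies in $Z(f)$ is essentially correct. The gap is in your final paragraph, where you handle $\ell_i\subset Z(f)$.

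You assert that ``the cuts of step (iii) must already eliminate $C$ through $\ell_i$,'' and your plan is to force a cut point of step (iii) onto the interior of $e_i$ itself. This is the wrong target: nothing forces the cut to occur on $\ell_i$, and your own premises contradict the hypothesis you are trying to use. You invoke ``the constancy of $\lambda$ along $\pi(C)$,'' but if $\ell_i\subset Z(f)$ then $v_i^+\in Z(f)$, and since $\lambda$ counts zeros on the \emph{open} downward ray, the upward jump from $v_i^+$ to $v_{i+1}^-$ \emph{strictly} increases $\lambda$ by at least one. So $\lambda$ cannot be constant along $\pi(C)$ in this case, regardless of whether (i)--(iii) have cut anything.

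The paper's argument uses exactly this strict increase. From the closed-loop identity and weak monotonicity on vertical jumps, the level must strictly \emph{decrease} along some non-vertical edge $e_j$ (possibly $j\ne i$). By your own paragraph~2, such a decrease can only occur above a point where $g_{\ell_j}=\partial g_{\ell_j}/\partial z=0$ (or the analogous statement for $g_{\ell_j}^0$ if $\ell_j\subset Z(f)$), and that point was cut in step (ii) or step (iii). So $C$ is eliminated by steps (i)--(iii) at $e_j$, not necessarily at $e_i$, and not necessarily by step (iii). Once you drop the insistence that the cut land on $\ell_i$, your existing machinery (constancy of $\lambda$ on uncut edge pieces, monotonicity on jumps, zero total change) already yields the contradiction immediately.
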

%%%%%%%%%%%%%%%%%%%%%%%%%%%%%%%%%%%%%%%%%%%%%%%%%%%%%%
\begin{proof}
The proof is by induction on the size of the input. The claim holds
at the bottom of recursion, because we make all possible cuts there, thereby eliminating 
all cycles. Consider then any non-terminal instance of the recursion, involving
some subset of lines, which, for convenience, we again call $\L$.

As argued above, it suffices to show that we have cut all simple cycles.
Let $C$ be a simple cycle in $\L$, formed by some $k$ lines $\ell_1,\ldots,\ell_k$,
with $\ell_1\prec \ell_2\prec\cdots\prec \ell_k\prec \ell_1$, and $k\ge 3$. Let $C^*$ denote the
$xy$-projection of~$C$, which is a simple polygon with $k$ sides $e_1^*,\ldots,e_k^*$,
so that, for $i=1,\ldots,k$, $e_i^*\subset\ell_i^*$ is the $xy$-projection of the
corresponding segment $e_i\subset\ell_i$ of the path $\pi(C)$.

If $Z(f)$ does not intersect $\pi(C)$ then $\pi(C)$ is fully contained in some cell $\tau_i$,
and therefore all the lines $\ell_1,\ldots,\ell_k$ belong to $\L_i$, since they all intersect~$\tau_i$. 
By induction, the cycle~$C$ will be eliminated by the recursive call to the line-cutting
procedure on~$\L_i$. Assume then, in what follows, that $Z(f)$ intersects $\pi(C)$.

Assume first that $Z(f)$ does not fully contain any of the lines $\ell_1,\ldots,\ell_k$.
If $Z(f)$ intersects (but does not contain) one of the segments $e_i$, for $i=1,\ldots,k$,
then this intersection point, at which we have cut $\ell$ in step (i), eliminates the
cycle~$C$.

Assume next that none of the lines $\ell_1,\ldots,\ell_k$ is fully contained in $Z(f)$,
and that none of the segments $e_i\subset \ell_i$ is crossed by $Z(f)$.
In this case, the crossing points of $\pi(C)$ with $Z(f)$ must all lie on
the vertical edges of $\pi(C)$. Recall that we have ensured that $Z(f)$ does not fully
contain any such segment.

Trace $\pi(C)$ in a circular fashion, as in its definition,
and keep track of the level $\lambda(q)$ in $Z(f)$ of the point $q$ being traced.
By our general position assumption, and by the tilting performed above,
the level is well defined, and it can change only either (i) at the vertical jumps of 
$\pi(C)$, or (ii) at a discrete set of points $q$ on the edges $e_i$ of $\pi(C)$ 
at which the univariate restriction of $f$ to the vertical line through $q$ has a 
multiple real root (recall that, by assumption, $Z(f)$ does not intersect any $e_i$); 
see below for a discussion of this statement.
Each time we go up along one of the vertical segments of $\pi(C)$, the
level either increases or stays the same, and, by assumption, it strictly increases at
least once along the cycle, when $\pi(C)$ intersects $Z(f)$.  Since the levels at the beginning and at
the end of the tour are the same, the level must go down at least
once, as we trace one of the segments $e_i$, $i=1,\ldots,k$. Suppose,
without loss of generality, that the level goes down as we trace
$e_1$.  This must happen at a point $q \in\ell_1$ 
at which the univariate restriction of $f$ to the vertical line through $q$ has a 
multiple real root. That is, $q$ lies vertically above a point at which
$g_{\ell_1} = \partial g_{\ell_1}/\partial z = 0$, where $g_{\ell_1}$ refers here to the original
version of the restriction of $f$ to $h(\ell_1)$. Now if $g_{\ell_1}$ is square-free, we are done,
since, by construction, we have cut $\ell_1$ at $q$, and thus $C$ got eliminated
by this cut. If $g_{\ell_1}$ is not square-free, it is possible that
$g_{\ell_1} = \partial g_{\ell_1}/\partial z = 0$ along a one-dimensional curve, so this
property holds for a continuum of points $q$ on $\ell_1$. However, in such a case
(i) the multiplicity of the root does not cause the level to change at $q$, and
(ii) this vanishing on a one-dimensional curve does not occur for the square-free version
of $g_{\ell_1}$. This implies that the change in level must occur at a criticality of 
the square-free version of $g_{\ell_1}$, and $\ell_1$ has been cut above every such 
criticality, implying that $C$ has been eliminated in this case as well.

Finally, consider the case where one (or more) of the lines $\ell_1,\ldots,\ell_k$ is
fully contained in~$Z(f)$; say $\ell_1$ is such a line.
If one of the edges $e_i$ of $C$ has been cut by steps (i)--(iii),
we are done, so assume that this did not happen. As a point $q$ traces $\pi(C)$,
as above, the level~$\lambda(q)$ goes up at least once, when we go up from $v_1^+$ to
$v_2^-$ (at $v_2^-$ we count $\ell_1$ (i.e., $v_1^+\in Z(f)$) in its level, whereas at $v_1^+$ we do not).
Since the level cannot go down along any of the vertical upward edges of~$\pi(C)$,
it must go down when $q$ traverses some edge $e_i$ of $C$.  Therefore, arguing
as above, $q$ must lie directly above a critical point of the square-free version
of the restricted polynomial~$g_{\ell_i}$, or of its reduced version $g^0_{\ell_i}$
if $\ell_i \subset Z(f)$ ($\ell_1$ falls into this latter case), for some $i=1,\dots,k$.  
In either case, $\ell_i$ has been cut at $q$ and $C$ has been eliminated.

Having covered all cases, the lemma follows.
\end{proof}

It remains to bound the number of cuts. 
Collecting the bounds from each step of our construction and maximizing over $\L$
produces the recurrence 
\[
\chi(n) \le \begin{cases}
bD^3(n)\chi(cn/D^2(n)) + O(nD^2(n)) , & \text{for $n > D^2(n)/c ,$} \\
bD^4(n), & \text{for $n\le D^2(n)/c$} ,
\end{cases} 
\]
where $c$ is the absolute constant from Guth's construction, mentioned above, and $b$ is another suitable absolute constant.

Setting $D(n)=c^{1/2}n^{1/4}$, the termination condition $n\le D^2(n)/c$ becomes $n\le 1$, in which case
no cuts are needed.\footnote{%
  For the algorithmic part of the analysis, it is preferable to work with constant degree $D$,
  which is why, up to this point, we have not committed to a particular choice of~$D=D(n)$; 
  see the remark below and Section~\ref{sec:alg} for more details.}
That is, $\chi(1)=0$, and for $n>1$ we have 
\[
\chi(n) \le b_1n^{3/4}\chi(n^{1/2}) + O(n^{3/2}) ,
\]
for another absolute constant $b_1 = bc^{3/2}$.
Since the depth of recursion is $O(\log\log n)$, its solution is easily seen to be
$\chi(n) = O(n^{3/2}\log^{\beta}n )$, where $\beta$ is a constant that depends only on
the absolute constant $b_1$. This completes the proof of the upper bound.

%--------------------------------------------------------

\paragraph*{Lower bound.}
The near-tightness of the bound follows from the grid-like construction of $\Theta(n^{3/2})$
\emph{joints} (points incident to at least three non-coplanar lines)
in a collection of $n$ (or rather $3n$) lines, where the joints are the vertices of the
$\sqrt{n}\times \sqrt{n}\times \sqrt{n}$ integer grid, and the lines are the $3n$ axis-parallel
lines of the grid; see, e.g., \cite{AKS,CEG+}.
By slightly perturbing (translating and tilting) each of the lines, and by appropriately
tilting the coordinate frame, each joint is mapped to a small elementary triangular cycle in 
the arrangement of $\Theta(n)$ lines in general position in $\reals^3$.
As the cycles do not overlap, each requires a separate cut.
\end{proof}

\begin{remark}
  (1) Setting $D$ to a sufficiently large constant, rather than a function
  of $n$, in the above argument allows us to avoid having to work with
  polynomials whose degree depends on~$n$, at the expense of slightly weakening the
  upper bound to $O(n^{3/2+\eps})$, where $\eps=\eps(D)>0$ depends on
  the choice of $D$ and can be made arbitrarily small.  Of course, the
  implied constant in the big-Oh grows with~$D$.

\smallskip

\noindent
  (2) In the opposite direction, sharpening the bound further by increasing $D$ does not work in the
  present setting, because of step (ii) of the construction (and the corresponding portion of step (iii)),
  where each line is cut at $O(D^2)$ points. If only step (i) were sufficient, the non-recursive overhead would
  have been only $O(nD)$, and then we could choose $D=O(n^{1/2})$, skipping the recursion altogether, to obtain
  the worst-case optimal bound $O(n^{3/2})$. It is an interesting challenge to improve the bound along these lines.
\end{remark}

%%%%%%%%%%%%%%%%%%%%%%%%%%%%%%%%%%%%%%%%%%%%%%%%%%%%%%%%%%%%%%
\section{Algorithmic considerations}
\label{sec:alg}
%%%%%%%%%%%%%%%%%%%%%%%%%%%%%%%%%%%%%%%%%%%%%%%%%%%%%%%%%%%%%%

We now outline and discuss several algorithms for eliminating cycles.  Notice that identifying a smallest possible set of 
cuts to eliminate all cycles for a given family of line segments is shown in Aronov et al.~\cite{ABGM} to be NP-complete.

\paragraph*{Implementing the procedure in the proof of Theorem~\ref{thm:cuts}.}
The most straightforward way to obtain the required cuts would be to implement the 
mostly-constructive proof of Theorem~\ref{thm:cuts}, except that we set $D$ to be
a sufficiently large constant (see the remark after the proof), in order to control
the cost of the algebraic calculations that are needed to determine the cutting points.

However, this would require an effective and efficient construction of
the partitioning polynomial of Guth \cite{Gut}, the existence which is currently 
an open problem. One may hope that the techniques
developed in Agarwal et al.~\cite{AMS} for effective construction of
``approximate partitioning polynomials'' for sets of points would be
helpful here as well, especially since the degree $D$ is now assumed to be a constant. 
However, the machinery employed by Guth to
prove the existence of the said polynomial is sufficiently different
to make an extension to this case a serious challenge.

The rest of the algorithm would proceed as in the proof of Theorem~\ref{thm:cuts}.
One needs to assume a suitable model of algebraic computation that supports
constant-time execution of each of the various primitive algebraic operations
required by the algorithm (such as finding the intersections of a
line with $Z(f)$, finding the critical points of the polynomials $g_\ell$, etc.) 
for constant-degree polynomials. See, e.g., Basu et al.\cite{BPR} for a discussion
of the existing machinery for implementing operations of this kind.

\paragraph*{The algorithms of Har-Peled and Sharir and of Solan.}
Alternatively, we can use the algorithms of Har-Peled and Sharir\cite{HPS} or of Solan\cite{So},
specifically designed to eliminate cycles in the depth relation.  Given a collection~$\L$ of 
$n$~lines (or line segments) in $\reals^3$, these algorithms work on the arrangement $\A(\L^*)$ 
of the $xy$-projections of the lines, and partition the plane into regions, either by a cutting 
(in Solan \cite{So}), or by incrementally refining regions into subregions (in Har-Peled and Sharir \cite{HPS}),
exploiting the fact that one can efficiently detect the presence of a depth cycle in a collection of line segments
in $\reals^3$ using an algorithm of De~Berg et al.~\cite{dBOS}.
Both algorithms generate close to $O(n\sqrt{\chi})$ cuts, where $\chi$ is the minimum number 
of cuts required to eliminate all cycles in the given set of segments. 
Concretely, the slightly improved randomized algorithm in \cite{HPS}
makes $O(n\sqrt{\chi}\alpha(n)\log n)$ cuts in expectation (the bound in Solan's algorithm 
\cite{So} is slightly worse), and runs in expected time $O(n^{4/3+\eps}\chi^{1/3})$, 
for any $\eps>0$; see \cite[Theorem~6.1]{HPS} and \cite[Theorem~2.1]{So};
here $\alpha(\cdot)$ is the inverse Ackermann function. Therefore, we may conclude:
\begin{theorem}
  There exists a randomized algorithm that, given a set of $n$~lines in~$\reals^3$,
  can find a set of $O(n^{7/4}\polylog n)$ cuts sufficient to eliminate all cycles in 
  the depth relation among the lines, in expected time $O(n^{11/6+\eps})$, for any $\eps>0$.
\end{theorem}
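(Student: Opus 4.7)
The plan is to obtain this algorithmic result by direct substitution of the combinatorial bound of Theorem~\ref{thm:cuts} into the performance guarantees of the Har-Peled--Sharir cycle-elimination algorithm \cite{HPS}. The algorithm of \cite{HPS} takes as input an arbitrary collection of $n$ lines (or segments) in $\reals^3$, and, in expectation, produces $O(n\sqrt{\chi}\,\alpha(n)\log n)$ cuts that eliminate all depth cycles, running in expected time $O(n^{4/3+\eps}\chi^{1/3})$, for any $\eps>0$, where $\chi=\chi(\L)$ denotes the minimum number of cuts required for the input~$\L$. Crucially, the combinatorial quantity $\chi$ enters both bounds, but the algorithm itself does not need to know $\chi$ in advance.

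First, I would invoke Theorem~\ref{thm:cuts}, applied to the given input set $\L$ of $n$ non-vertical lines in general position, to conclude that
\[
\chi(\L) \;=\; O(n^{3/2}\polylog n).
\]
Plugging this bound into the cut count from \cite{HPS} yields an expected number of cuts of order
\[
n \cdot \bigl(n^{3/2}\polylog n\bigr)^{1/2} \cdot \alpha(n)\log n
\;=\; O(n^{7/4}\polylog n),
\]
and plugging it into the time bound gives an expected running time of order
\[
n^{4/3+\eps} \cdot \bigl(n^{3/2}\polylog n\bigr)^{1/3}
\;=\; O(n^{11/6+\eps}),
\]
where the polylogarithmic factor arising from $\chi$ is absorbed into $n^{\eps}$ by slightly enlarging $\eps$, which is permitted since $\eps$ is arbitrary.

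The only subtlety worth addressing is justifying that the hypotheses of Theorem~\ref{thm:cuts} apply: the input is assumed to consist of lines in general position, which is ensured, if necessary, by a standard symbolic perturbation that does not change the asymptotic number of required cuts. I would also briefly note that one could equivalently use Solan's algorithm \cite{So} as the engine, obtaining the same asymptotic cut count up to slightly worse logarithmic factors; the reason for stating the theorem using \cite{HPS} is precisely that its cut count $O(n\sqrt{\chi}\alpha(n)\log n)$ is the sharper of the two, and because its time bound depends only on $\chi^{1/3}$, which keeps the exponent below $2$. Since the proof is essentially a substitution, there is no serious obstacle: the work was already done in proving Theorem~\ref{thm:cuts} and in the analyses of \cite{HPS,So}.
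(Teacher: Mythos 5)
Your proposal is correct and follows exactly the paper's own argument: substitute the bound $\chi = O(n^{3/2}\polylog n)$ from Theorem~\ref{thm:cuts} into the Har-Peled--Sharir guarantees of $O(n\sqrt{\chi}\,\alpha(n)\log n)$ cuts and $O(n^{4/3+\eps}\chi^{1/3})$ expected time. The arithmetic and the remark that Solan's algorithm gives a slightly weaker cut count also match the paper.
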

Clearly, the algorithm falls short of
the ideal double goal of (a)~finding a set of cuts close to the minimum possible size (or just of size close to $n^{3/2}$), as in the
(not yet fully polynomial-time) construction in the proof of Theorem~\ref{thm:cuts}, and (b)~doing it in time close to $n^{3/2}$.
The main merit of this approach is its efficiency, as it runs in subquadratic time.

\paragraph*{The approximation algorithm of Aronov et al.}
The third algorithm, at the present state of affairs, appears to be the best of the three
approaches. It is due to Aronov et al.~\cite[Theorem 3.1]{ABGM}, and is based on an approximation
algorithm of Even et al.~\cite{ENRS} for Feedback Vertex Cover. It is a deterministic algorithm
that produces a set of $O(\chi\log\chi\log\log\chi)$ cuts that eliminate all cycles, where
$\chi$ is the smallest size of such a set, and runs in $O(n^{4+2\omega}\log^2 n) = O(n^{8.746})$
time, where $\omega < 2.373$ is the exponent of matrix multiplication. That is, this yields a
(not very efficient but still) polynomial-time algorithm that gets very close to the minimum 
number of cuts needed. In particular, it generates $O(n^{3/2}\polylog n)$ cuts (where the power 
in the polylogarithmic factor is slightly larger than the one in Theorem~\ref{thm:cuts}).

%%%%%%%%%%%%%%%%%%%%%%%%%%%%%%%%%%%%%%%%%%%%%%%%%%%%%%%%%%%%%%
\section{Extensions to line segments and algebraic arcs}
\label{sec:ext}
%%%%%%%%%%%%%%%%%%%%%%%%%%%%%%%%%%%%%%%%%%%%%%%%%%%%%%%%%%%%%%

In this section we discuss two extensions of our technique, to sets of line segments
and of constant-degree algebraic arcs.

\paragraph*{The case of line segments.}
Consider a non-degenerate set $S$ of $n$ non-vertical line segments in $\reals^3$,
and let $\A(S^*)$ be the arrangement formed by their $xy$-projections; as we assume 
general position, each vertex of $\A(S^*)$ is either the projection of an endpoint
of a segment in $S$, or a proper crossing of two projected segments. 
Let $X$~denote the number of vertices of the latter kind; we refer to them as \emph{proper} vertices.

Of course, suitably perturbed, $S$ can be extended to a set of lines
in general position, and therefore all cycles in $S$ can be eliminated
using $O(n^{3/2}\polylog n)$ cuts, by Theorem~\ref{thm:cuts}.  
We want to refine this bound, to make it depend on $X$.

Clearly the case $X=0$ requires no cuts, and if $X\le n$, we cut every segment $s$ 
near each point projecting to a proper vertex of $\A(S^*)$, thereby making $O(n)$ cuts and eliminating all cycles.

For larger values of $X$, set $r:= n^2/X < n$, and construct a
$(1/r)$-cutting of $S^*$ with $O(r+\frac{r^2}{n^2}X) = O(r)$ trapezoids, 
each crossed by at most $n/r$ segments in $S^*$ \cite{dBS}.
Cut every segment of $S$ at each point lying vertically above the boundary of a 
trapezoid of the cutting, thereby making $O(r)\cdot(n/r) = O(n)$ cuts.
Now apply the bound of Theorem~\ref{thm:cuts} over each trapezoid separately, concluding that
\[
O\left( n + r(n/r)^{3/2} \polylog(n/r) \right) =
O(n^{1/2}X^{1/2}\polylog n)
\]
cuts are sufficient to eliminate all cycles.
Combining the two cases and using the algorithms of Har-Peled and Sharir\cite{HPS} or of Solan\cite{So}, 
or of Aronov et al.~\cite{ABGM}, we conclude:
%%%%%%%%%%%%%%%%%%%%%%%%%%%%%%%%%%%%%%%%%%%%%%%%%
\begin{theorem}
  \label{thm:segments}
  The number of cuts sufficient to eliminate all cycles in a family of $n$ 
  non-vertical line segments 
  in general position in $\reals^3$ is $O(n + n^{1/2}X^{1/2}\polylog n)$, where $X$~is the 
  number of pairs of segments whose $xy$-projections cross.

One can compute (i)~$O(n+n^{3/4}X^{1/4}\polylog n)$ cuts that eliminate all cycles, in expected 
time $O(n^{3/2+\eps}X^{1/6})$, for any $\eps>0$, or
(ii)~$O(n\log n\log\log n + n^{1/2}X^{1/2}\polylog n)$ cuts that eliminate all cycles, in deterministic
time $O(n^{4+2\omega}\log^2n)$, where $\omega$ is the matrix multiplication exponent.
\end{theorem}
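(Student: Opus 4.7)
The plan is to reduce Theorem~\ref{thm:segments} to Theorem~\ref{thm:cuts} by a classical output-sensitive cutting argument, and then to plug the resulting upper bound on~$\chi$, the minimum number of cuts, into the known cycle-elimination algorithms of Har-Peled and Sharir~\cite{HPS} and of Aronov et al.~\cite{ABGM}.

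For the existence bound I would separate two cases. If $X\le n$, cutting every segment at each point projecting to a proper vertex of $\A(S^*)$ eliminates all cycles using only $O(n)$ cuts, since any projected cycle visits at least one such vertex. If $X>n$, I would set $r:=n^2/X\in(1,n)$ and take a planar $(1/r)$-cutting of $S^*$, which, by the standard cutting bound for segments in the plane~\cite{dBS}, has $O(r+r^2X/n^2)=O(r)$ trapezoidal cells, each crossed by at most $n/r$ projected segments. Cutting every segment above the boundary of every trapezoid costs $O(n)$ cuts, after which each projected piece lies in one closed cell. Invoking Theorem~\ref{thm:cuts} inside each cell (after harmlessly extending the surviving pieces to lines) contributes $O((n/r)^{3/2}\polylog(n/r))$ further cuts; summing over $O(r)$ cells and substituting $r=n^2/X$ yields the claimed $O(n+n^{1/2}X^{1/2}\polylog n)$ bound.

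For algorithm (ii) I would simply feed the input to the deterministic algorithm of Aronov et al.~\cite[Theorem~3.1]{ABGM}. It produces $O(\chi\log\chi\log\log\chi)$ cuts in time $O(n^{4+2\omega}\log^2 n)$; substituting the freshly proved upper bound on~$\chi$ gives the target cut count $O(n\log n\log\log n+n^{1/2}X^{1/2}\polylog n)$ with no change in running time. For algorithm~(i), I would combine the cutting with the randomized algorithm of Har-Peled and Sharir~\cite{HPS}, which outputs $O(n\sqrt{\chi}\polylog n)$ cuts in expected time $O(n^{4/3+\eps}\chi^{1/3})$. Concretely, I would first apply a $(1/r)$-cutting at the cost computed above and then run HPS independently in each trapezoid~$T$, using the local estimate $\chi_T=O(n_T+n_T^{1/2}X_T^{1/2}\polylog n_T)$ for a cell containing $n_T\le n/r$ segments and $X_T$ proper crossings (this follows by applying the existence part inside~$T$). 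The totals $\sum_T n_T\sqrt{\chi_T}\polylog n_T$ and $\sum_T n_T^{4/3+\eps}\chi_T^{1/3}$ would then be bounded by H\"older's inequality, together with $\sum_T n_T=O(n+rX/n)$ and $\sum_T X_T\le X$, and finally $r$ would be chosen to balance the resulting expressions against the cutting overhead.

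The main obstacle I anticipate lies exactly in this last optimization: the target cut bound in~(i) is substantially smaller than what one gets by plugging the upper bound on $\chi$ directly into HPS on the whole input, so reaching it requires a careful choice of $r$ and an application of H\"older with the right exponents, while making sure that the $O(n+rX/n)$ boundary-cut term does not dominate and that the $r$ optimizing the cut count is compatible with the one giving the advertised running time. The remaining ingredients---the cutting itself, the easy case $X\le n$, and the black-box invocations of~\cite{ABGM,HPS,So}---are standard.
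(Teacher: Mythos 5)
Your existence argument coincides with the paper's: the same two cases $X\le n$ and $X>n$, the same choice $r=n^2/X$, the same $(1/r)$-cutting of size $O(r+r^2X/n^2)=O(r)$ with $O(n)$ cuts above trapezoid boundaries, and the same invocation of Theorem~\ref{thm:cuts} inside each trapezoid. Part~(ii) is also handled exactly as the paper intends: run the algorithm of \cite{ABGM} once on the whole input and substitute the freshly proved bound on $\chi$ into its $O(\chi\log\chi\log\log\chi)$ guarantee.

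For part~(i) the paper offers no details beyond ``use the algorithm of \cite{HPS}'', and the intended derivation is plainly a direct global substitution of $\chi=O(n+n^{1/2}X^{1/2}\polylog n)$ into the quoted guarantees of $O(n\sqrt{\chi}\,\alpha(n)\log n)$ cuts and $O(n^{4/3+\eps}\chi^{1/3})$ expected time; this reproduces the stated running time $O(n^{3/2+\eps}X^{1/6})$ exactly. Your per-trapezoid H\"older scheme is therefore more machinery than the paper uses, and the obstacle you sense at the end is genuine: no choice of $r$ will reach $O(n+n^{3/4}X^{1/4}\polylog n)$, because that bound is impossible as printed --- at $X=\Theta(n^2)$ it reads $O(n^{5/4}\polylog n)$, which is below the $\Omega(n^{3/2})$ worst-case number of cuts that are \emph{necessary}, and also below the $O(n^{7/4}\polylog n)$ that the very same algorithm is credited with for full lines in Section~\ref{sec:alg}. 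The exponent $3/4$ is evidently a typo for $5/4$: direct substitution gives $O(n+n^{5/4}X^{1/4}\polylog n)$ cuts, consistent with $n^{7/4}$ at $X=\Theta(n^2)$. You should prove that bound (a one-line substitution) rather than chase the printed one.
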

%%%%%%%%%%%%%%%%%%%%%%%%%%%%%%%%%%%%%%%%%%%%%%%%%

\paragraph*{The case of algebraic arcs.}
Our argument can be extended, with fairly minor adjustments, to a similar bound on
the number of cuts needed to eliminate all cycles in a collection of $n$ constant-degree
algebraic curves or arcs, with a suitable general position assumption. We do not spell out
all the details of this extension, which can be found in a subsequent work of Sharir and Zahl~\cite{ShZ}.
Here we only highlight the necessary modifications.

First, in the case of curves or arcs, one can have cycles of length~$1$ (a curve passing above
itself) or~$2$ (two twisted curves, each passing above the other). This however does not affect
the argument in any significant manner.

The definition of a cycle and of a simple cycle, the $xy$-projection of a cycle, and the path
associated with a cycle, extend to the case of arcs in an immediate and obvious manner.

Guth's polynomial partitioning technique~\cite{Gut} also applies for constant-degree algebraic curves,
with the same performance parameters (albeit with potentially larger constants of proportionality that
depend on the maximum degree of the curves). This allows us to run the same recursive cutting procedure,
and use analogous reasoning to show that it does indeed eliminate all cycles. (The aforementioned
follow-up application of our technique by Sharir and Zahl~\cite{ShZ} spells out the algebraic issues
that arise in such an extension.) It results in a similar
recurrence (with different constants), that solves to the same bound $O(n^{3/2}\log^\beta n)$, albeit
with a larger exponent $\beta$ which now depends on the maximum degree of the arcs.

Finally, to obtain a bound that depends on the number of intersecting pairs of arc projections, we can 
first construct a $(1/r)$-cutting of $\A(\Gamma^*)$, exactly as in the case of segments,\footnote{%
  This extension of the analysis of \cite{dBS}, which had originally been done explicitly only for the case of segments, to more general well-behaved arcs, appears to be considered folklore (see, e.g., \cite{AAS,HP}); refer to the discussion of this issue in the subsequent work~\cite{ShZ}.}
and apply the bound on the number of cuts within each ``pseudo-trapezoid'' of the cutting separately, 
resulting in the following summary result (again, for the detailed and rigorous proof, see~\cite{ShZ}).
%%%%%%%%%%%%%%%%%%%%%%%%%%%%%%%%%%%%%%%%%%%%%%%%%
\begin{theorem}
  \label{thm:arcs}
  The number of cuts sufficient to eliminate all cycles in a family of $n$ constant-degree 
  algebraic curves or arcs in general position in $\reals^3$ is $O(n + n^{1/2}X^{1/2}\polylog n)$, 
  where $X$~is the number of pairs of arcs whose $xy$-projections cross, and where the constant
  of proportionality and the exponent of the polylogarithmic factor depend on the degree of the input arcs.
\end{theorem}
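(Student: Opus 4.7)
The plan is to mirror the two-tier strategy used for segments in Theorem~\ref{thm:segments}, with constant-degree algebraic arcs replacing lines at every step. First I would transfer the basic definitions: for a family $\Gamma$ of $n$ such arcs in general position, the notions of $k$-cycle, simple cycle, associated spatial path $\pi(C)$, and $xy$-projection $C^*$ all carry over directly, with the only new feature that $k=1$ (an arc passing above itself) and $k=2$ (two twisted arcs, each above the other) are now possible; these are still covered by the level-tracking framework. The reduction from arbitrary to simple cycles by shortcutting at self-crossings of $C^*$ also goes through unchanged, since that argument uses only that $C^*$ is a closed curve in $\bigcup \Gamma^*$.

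Next I would establish the arc analog of Theorem~\ref{thm:cuts}, yielding a uniform bound of $O(n^{3/2}\log^\beta n)$ cuts. Guth's partitioning theorem~\cite{Gut} also applies to constant-degree algebraic curves, producing, for any chosen degree $D$, a polynomial $f$ such that each of the $O(D^3)$ connected components of $\reals^3 \setminus Z(f)$ meets at most $O(n/D^2)$ arcs, with constants depending on the arc degree. After a generic tilt ensuring $Z(f)$ contains no vertical segment, I would run the same four-step procedure. In step~(i), $|\gamma \cap Z(f)| = O(D)$ by B\'ezout, since $\gamma$ has constant degree. In step~(ii), the role of the vertical plane $h(\ell)$ is played by the vertical algebraic cylinder over $\gamma^*$; restricting $f$ to this cylinder yields a bivariate polynomial of degree $O(D)$, and the square-free reduction together with a B\'ezout count of common zeros with $\partial g_\gamma/\partial z$ produces $O(D^2)$ critical-height cuts per arc. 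Steps~(iii) and~(iv) extend verbatim. The level-tracking proof of Lemma~\ref{cutok} uses only that $\lambda(q)$ can change either at vertical jumps of $\pi(C)$ or above criticalities of the relevant restricted polynomial, so it transfers without change. The resulting recurrence $\chi(n) \le bD^3\chi(cn/D^2) + O(nD^2)$ solves to $O(n^{3/2}\log^\beta n)$, with $\beta$ now depending on the maximum arc degree.

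With the uniform arc bound in hand, I would obtain the $X$-dependent bound exactly as for segments. If $X \le n$, cutting each arc near each projected crossing eliminates all cycles with $O(n)$ cuts. Otherwise, set $r = n^2/X$ and build a $(1/r)$-cutting of $\A(\Gamma^*)$ with $O(r + (r/n)^2 X) = O(r)$ pseudo-trapezoids, each crossed by at most $n/r$ projected arcs; the folklore extension of the construction of de~Berg and Schwarzkopf~\cite{dBS} to well-behaved algebraic arcs supplies such a cutting. Cutting every arc above every pseudo-trapezoid boundary costs $O(r)\cdot(n/r) = O(n)$ cuts, and applying the uniform arc bound inside each pseudo-trapezoid gives $O\bigl(r \cdot (n/r)^{3/2}\polylog(n/r)\bigr) = O(n^{1/2}X^{1/2}\polylog n)$ further cuts, matching the claimed bound.

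The main obstacle I expect is the algebraic bookkeeping: verifying that the restricted polynomial $g_\gamma$ on the vertical cylinder over $\gamma^*$ has all the properties used in the line proof (well-defined square-free reduction, B\'ezout-style count of common zeros with $\partial g_\gamma/\partial z$, correct behavior of multiple-root loci when $\gamma \subset Z(f)$), and confirming the folklore arc version of the $(1/r)$-cutting. Tracking how $\beta$ and the implicit constants depend on the arc degree is also more delicate than in the line case. These are precisely the details carried out by Sharir and Zahl~\cite{ShZ}, so the cleanest presentation is to sketch the modifications above and cite~\cite{ShZ} for the complete algebraic argument.
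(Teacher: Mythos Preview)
Your proposal is correct and follows essentially the same approach as the paper, which itself only sketches the modifications (cycles of length~$1$ and~$2$, Guth's partitioning for curves, the same recurrence, and the $(1/r)$-cutting reduction for the $X$-dependent bound) and defers the algebraic details to Sharir and Zahl~\cite{ShZ}. If anything, you spell out more of the mechanics (the vertical cylinder replacing $h(\ell)$, the B\'ezout counts) than the paper does, but the strategy is identical.
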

%%%%%%%%%%%%%%%%%%%%%%%%%%%%%%%%%%%%%%%%%%%%%%%%%

\begin{remark}
  Both the algorithms of Har-Peled and Sharir\cite{HPS} and of Solan
  \cite{So} rely on a subroutine for quickly checking if a set of
  lines or line segments has an acyclic depth relation.  Analogous
  tools would have to be developed in order to yield an efficient
  construction of a small set of cuts to eliminate all cycles, for the case
  of algebraic arcs and/or curves.

  Concerning the other approach of Aronov et al.~\cite{ABGM}, we do
  not know whether the reduction to Feedback Vertex Cover employed
  there extends to the more general case of algebraic arcs.  If it
  does, it would imply the existence of a polynomial-time
  approximation algorithm for this case as well.
\end{remark}


\begin{thebibliography}{}

\bibitem{AAS}
P. Agarwal, B. Aronov, and M. Sharir,
On the complexity of many faces in arrangements of 
pseudo-segments and of circles,
\emph{Discrete and Computational Geometry --- The Goodman-Pollack Festschrift},
B. Aronov, S. Basu, J. Pach, and M. Sharir (Eds.),
Springer-Verlag, Heidelberg, 2003, pp. 1--23.

\bibitem{AMS}
P.~K.~Agarwal, J.~Matou\v{s}ek, and M.~Sharir,
On range searching with semialgebraic sets~II,
\emph{SIAM J.~Comput.} 42 (2013), 2039--2062.
Also in arXiv:1208.3384.

\bibitem{ABGM}
B.~Aronov, M.~de Berg, C.~Gray, and E.~Mumford,
Cutting cycles of rods in space: Hardness results and approximation algorithms,
\emph{Proc. 19th Annu. ACM-SIAM Sympos. Discr. Alg.},
2008, 1241--1248.

\bibitem{AKS}
B.~Aronov, V.~Koltun, and M.~Sharir,
Cutting triangular cycles of lines in space,
\emph{Discrete Comput. Geom.} 33 (2005), 231--247.

\bibitem{stoc}
B.~Aronov and M.~Sharir,
Almost tight bounds for eliminating depth cycles in three dimensions,
\emph{Proc. 48th Annu. ACM Symposium on Theory of Computing}, 2016, to appear.

\bibitem{BPR}
S.~Basu, R.~Pollack, and M.-F. Roy, 
\emph{Algorithms in Real Algebraic Geometry}, 
Algorithms and Computation in Mathematics 10,
Springer-Verlag, Berlin, 2003.

\bibitem{MdB-thesis}
M. de Berg,
\emph{Efficient Algorithms for Ray Shooting and Hidden Surface Removal},
Ph.D.\ Dissertation, Department of Computer Science, University of Utrecht, 1992.

\bibitem{4M-book}
M.~de~Berg, O.~Cheong, M.~van Kreveld, and M.~Overmars,
\emph{Computational Geometry: Algorithms and Applications}, 3rd Ed., 
Springer-Verlag, Berlin--Heidelberg, 2008.

\bibitem{dBOS}
M.~de Berg, M.~Overmars, and O.~Schwarzkopf,
Computing and verifying depth orders,
\emph{SIAM J.~Comput.} 23 (1994), 437--446.

\bibitem{dBS}
M.~de~Berg and O.~Schwarzkopf,
Cuttings and applications,
\emph{Int. J. Comput. Geom. Appl.} 5 (1995), 343--355.

\bibitem{CEG+}
B.~Chazelle, H.~Edelsbrunner, L.J.~Guibas, R.~Pollack,
R.~Seidel, M.~Sharir, and J.~Snoeyink,
Counting and cutting cycles of lines and rods in space,
\emph{Comput. Geom. Theory Appls.} 1 (1992), 305--323.

\bibitem{ENRS}
G. Even, J. Naor, S. Rao, and B. Schieber,
Divide-and-conquer approximation algorithms via spreading metrics,
\emph{J.~ACM} 47 (2000), 585--616.

\bibitem{Gut}
L.~Guth,
Polynomial partitioning for a set of varieties,
\emph{Math. Proc. Cambridge Phil. Soc.} 159 (2015), 459--469.
Also in arXiv:1410.8871.

\bibitem{GK2}
L.~Guth and N.~H.~Katz,
On the Erd\H{o}s distinct distances problem in the plane,
\emph{Annals Math.} 181 (2015), 155--190.
Also in arXiv:1011.4105.

\bibitem{HP}
S.~Har-Peled,
Constructing cuttings in theory and practice,
\emph{SIAM J. Comput.} 29 (2000), 2016--2039.

\bibitem{HPS}
S.~Har-Peled and M.~Sharir,
Online point location in planar arrangements and its applications,
\emph{Discrete Comput. Geom.} 26 (2001), 19--40.

\bibitem{KSS}
H.~Kaplan, M.~Sharir, and E.~Shustin,
On lines and joints,
\emph{Discrete Comput. Geom.} 44 (2010), 838--843.
Also in arXiv:0906.0558.


\bibitem{Qu}
R.~Quilodr\'an,
The joints problem in $\reals^n$,
\emph{SIAM J. Discrete Math.} 23 (2010), 2211--2213.
Also in arXiv:0906.0555.

\bibitem{ShZ}
M. Sharir and J. Zahl,
Cutting algebraic curves into pseudo-segments and applications,
\emph{J. Combinat. Theory} Ser. A., submitted.
Also in arXiv:1604.07877.

\bibitem{So}
A.~Solan,
Cutting cycles of rods in space,
\emph{Proc. 14th Annu. ACM Sympos. Comput. Geom.}, 1998, 135--142.

\end{thebibliography}
\end{document}